\newcommand{\stitle}[1]{\vspace*{0.4em}\noindent{\bf #1.\/}}
\definecolor{cosmiclatte}{rgb}{1.0, 0.97, 0.91}
\begin{document}
%
\title{Exploring Distance Query Processing in Edge Computing Environments}
%
%
\author{Xiubo Zhang\inst{1}\and
Yujie He\inst{3}
\and Ye Li\inst{2}
\and Yan Li\inst{4}
\and  Zijie Zhou\inst{1}\and Dongyao Wei\inst{1} \and Ryan\inst{1}}
\authorrunning{F. Author et al.}
%
\institute{University of Macau, Macau    \and Shenzhen Institute of Beidou Applied Technology \and
Macau University of Science and Technology \and 
Shenzhen Polytechnic University\\
} 

\maketitle              %
\begin{abstract}
In the context of changing travel behaviors and the expanding user base of Geographic Information System (GIS) services, conventional centralized architectures responsible for handling shortest distance queries are facing increasing challenges, such as  heightened load pressure and longer response times. To mitigate these concerns, this study is the first to develop an edge computing framework specially tailored for processing distance queries. In conjunction with this innovative system, we have developed a straightforward, yet effective, labeling technique termed Border Labeling. Furthermore, we have devised and implemented a range of query strategies intended to capitalize on the capabilities of the edge computing infrastructure. Our experiments demonstrate that our solution surpasses other methods in terms of both indexing time and query speed across various road network datasets.  The empirical evidence from our experiments supports the claim that our edge computing architecture significantly reduces the latency encountered by end-users, thus markedly decreasing their waiting times.
\keywords{Distance Query  \and Edge Computing \and Labeling Scheme }
\end{abstract}
%
%

\section{Introduction}\label{sec:introduction}

Distance query plays an important role in Geographical Information Systems (GIS) and is widely utilized in spatial analysis. It involves finding the shortest distance on a road network, which is a common task in our daily life. According to reports from the Chinese government\footnote{\url{https://www.gov.cn/xinwen/2022-08/29/content_5707349.htm}} and the China Information Association\cite{Information}, China's enterprise location service platforms, such as Baidu Maps, Amap, Tencent Location, and Huawei Maps, process 130 billion location service requests per day. Also, Didi, a leading ride-hailing company, has 493 million users as of 2022\cite{website:didi}. Service providers encounter the challenge of efficiently handling a large volume of queries while considering real-time road network updates. Our work focuses on alleviating the burden of query requests and prioritizing swift responses to address these concerns.

Previous studies on distance query can be divided into two main categories: online search \cite{DIJKSTRA1959,A*} and bidirectional search \cite{bi-direct,bi-di2,DBLP:conf/alenex/bi-di3}. Both categories rely on centralized services that run on a single server, which makes them inefficient for processing a large number of queries in a short period of time, as shown in Fig.~\ref{fig:query}. Another challenge is to handle queries that incorporate the dynamic road network information. Current studies \cite{UE, DBLP:conf/icde/ZhangLHMCZ21, DBLP:conf/sigmod/WeiWL20,dynamic1} consider the dynamic changes in road networks and point out the challenges of index updating. If the index is not updated in time, users may have to use outdated road network information or wait longer for the query results, which may affect their user experience or travel decisions.

To supplement existing centralized methods for processing distance queries, we proposed an edge computing framework that is specifically designed to process distance queries. Notably, the framework is accompanied by an effective labeling technique called Border Labeling. Additionally, we point out that a variety of query strategies that take advantage of the capabilities of the edge computing infrastructure can be applied.

Our contributions are described as follows: (1) We present a novel system that operates within an edge computing environment to address the task of answering distance queries on dynamic graphs; (2) We create a simple yet powerful labeling technique called {\em Border Labeling}; (3) We propose a local bound for local distance queries, which reduces the overall response time while still secures the shortest path distance.

In Section~\ref{sec:hl}, we present the prerequisites of our algorithm: the hub labeling algorithm based on hub pushing, a pruning scheme for it, as well as the concepts of borders and districts. In Section \ref{sec:method:BL}, we formally introduce the process of our border labeling algorithm and prove its correctness. In Section~\ref{sec:enivornments}, we explain how we adapt border labeling for use in an edge computing environment. Our algorithm's performance is evaluated in terms of construction and query processing in Section~\ref{sec:exp}, and we also assess our edge computing framework's efficacy in scenarios characterized by high-frequency and voluminous updates of road network information. Finally, we conclude our findings in Section~\ref{conclusion}.

\section{Hub Labeling}\label{sec:pre}
\label{sec:hl}

A hub label, represented as $L(v)$, is a set containing some pairs of the form ${\langle}u,d_G(u,v){\rangle}$, where $u$ and $v$ are vertices, and $d_G(u,v)$ denotes the shortest distance from vertex $u$ to vertex $v$ in the graph $G$. The complete set of labels is denoted as $L = \bigcup_v L(v)$, which encompasses the union of all hub labels for every vertex in the graph. The primary goal of constructing hub labels is to ensure the 2-hop coverage property (for query processing) while minimizing the size of the label set $L$.

\subsection{Constructing Hub Labels by Hub Pushing}

\begin{figure}[!htb]
    \centering
    \subfigure[Pushing hub $v_0$ to all vertices. The distance from $v_0$ to each vertex is kept in the label set of each vertex.]{\includegraphics[width=0.48\columnwidth]{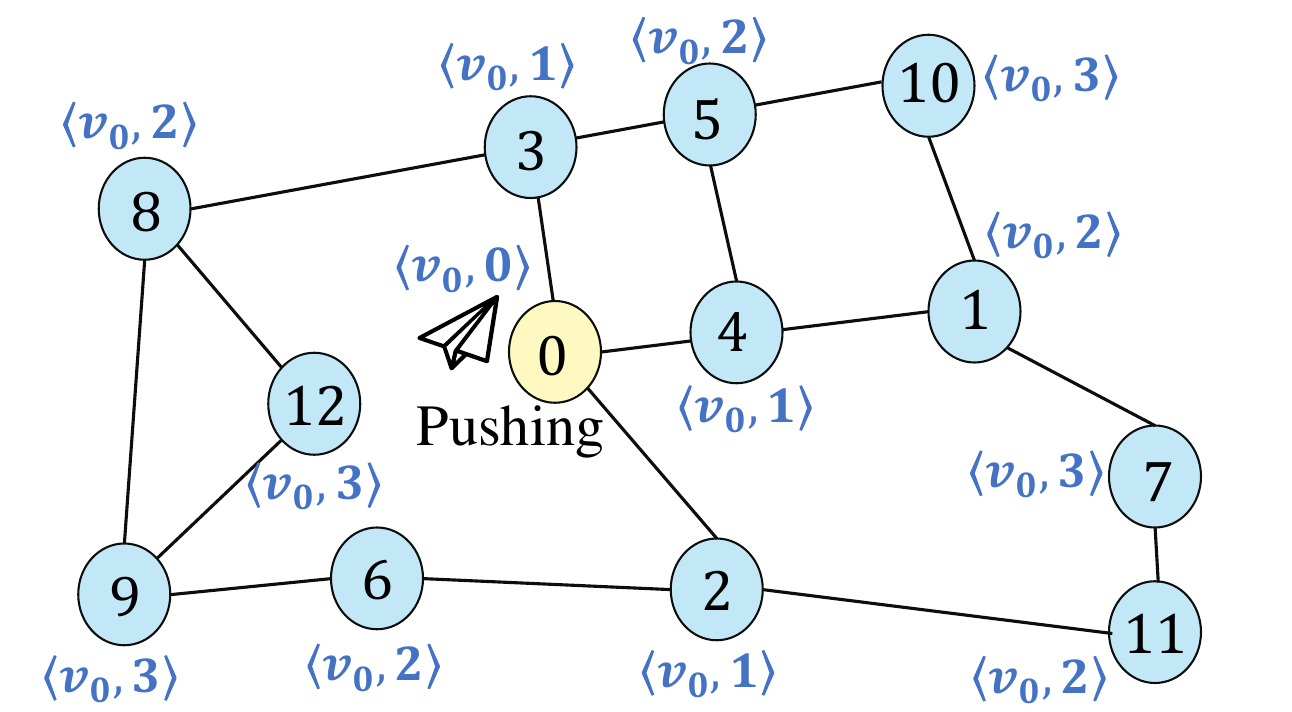}} \hspace{2mm}
    \subfigure[Pushing hub $v_1$. We pruned $v_2$ and $v_3$ and stopped traversing from them. We do not push $v_1$ to $v_0$ for its priority.]
    {\includegraphics[width=0.48\columnwidth]{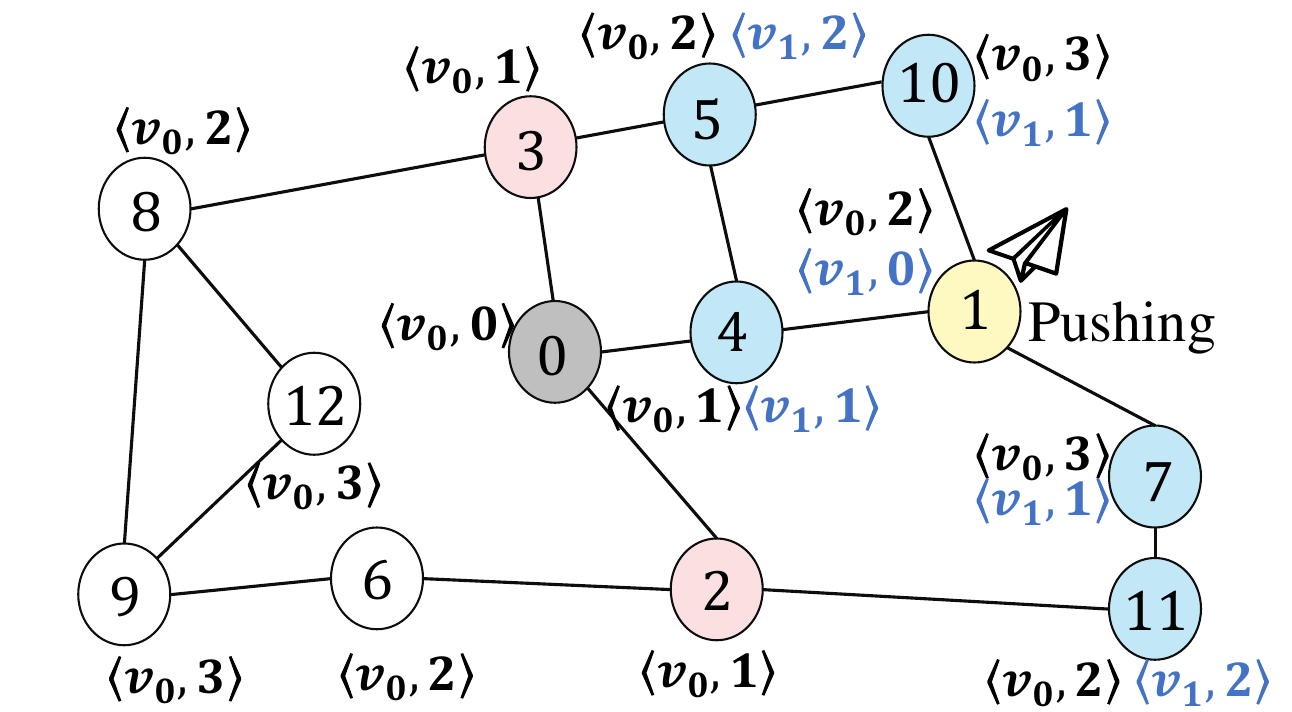}}
    
\caption{Example hub pushing situations. In each case, the yellow vertex denotes the root of one pushing operation, the blue vertices denote those which were visited and labeled in the pushing, the pink vertices denote those which were visited but pruned, and the gray vertices denote roots of previous pushing operations.}
\label{fig::1b}
\label{fig:result_of_DCA}
\end{figure}

A commonly used technique for constructing the index $L$ is an iterative method known as {\em hub pushing}~\cite{PLL13}. This approach involves pushing a vertex label, referred to as a hub, to all vertices that it can reach with a higher hierarchy in the ordering. 
The concept of {\em order} $\mathcal{O}$ in this context signifies the precedence in which vertices are selected for label pushing. Specifically, vertices assigned with lower order values are given precedence and their labels are pushed first. 
Once all vertices have been processed in this priority-driven manner, the label structure is considered complete.

\begin{example}
To provide an illustration, let us consider the vertex with the highest priority, referred to as $v_1$, within the graph shown in Figure~\ref{fig::1b}(a). During the hub pushing phase, labels in the form \(\langle v_1,d_G(v_1,u)\rangle\) are pushed to all reachable vertices, thereby incorporating $v_1$ into their respective labels as a hub.
\end{example}

\stitle{Query Processing} Given a label set $L$, we can find the distance between two vertices $s$ and $t$ by a linear join process as follows.

\begin{definition}[Hub Label Distance Query $\lambda(s,t,L)$]\label{def:DQ}
The distance between vertices $s$ and $t$ can be obtained by applying a linear join of labels from the label set $L$.
\begin{equation}
   \lambda(s,t,L) = \min_{h \in L(s) \cap L(t)}(d_G(s,h) + d_G(h,t))
\end{equation}

\end{definition}

We define $\lambda(s,t,L) = \infty$ if $L(s)$ and $L(t)$ do not have any common vertex. It is important to note that the correctness of the distance calculation relies on the label set $L$ satisfying the 2-hop cover property (Definition~\ref{def:2hopcover}). 

\begin{definition} [2-Hop Cover] \label{def:2hopcover}
We call a set of labels $L$ a \textit{2-hop cover} of $G$ if $d_G(s,t) = \lambda(s,t,L)$ for any pair of vertices $s$ and $t$.
\end{definition}

\stitle{Pruned Landmark Labeling} Akiba et al. \cite{PLL13} proposed  a pruning method of the naive hub labeling. When performing a Dijkstra starting from vertex $v$ and visiting vertex $u$, with a partially built label set $L$, if $\lambda(v, u, L) = d_G(v, u)$, label ${\langle}v,d_G(u,v){\rangle}$ will not be added to the modified label set $L'(u)$ and the algorithm stops traversing any edge from vertex $u$. This is because distance $d_G(v,u)$ in graph $G$ can already be determined by combining the stored pairs in labels $\langle w,d_G(v,w)\rangle\in L(v)$ and $\langle w,d_G(w,u)\rangle\in L(u)$, which indicates that labels in vertex $w$ has already provided sufficient information for computing the shortest distance. In other words, any further traversal of an edge from vertex $u$ would only lead to unnecessary redundancy in storage. They named this technique \textit{Pruned Landmark Labeling (PLL)}. For example, in Figure~\ref{fig::1b}(b), pruning happened at $v_2$ and $v_3$ in the pushing from $v_1$.

\subsection{Decomposition-based Hub Labeling}

By decomposing a graph into multiple smaller subgraphs, it becomes feasible to convert a distance query into a series of sub-distance queries. This decomposition can significantly reduce the search space and memory usage required for evaluating the subproblems. Various methods \cite{par16,par2} have showcased the advantages of such techniques in terms of scalability, particularly in reducing search space and memory consumption for handling large datasets. The decomposition process creates two new concepts, {\em district}  and {\em border}.

\begin{figure}[ht]
  \centering

  \subfigure{\adjustbox{valign=c}{\includegraphics[width=0.45\textwidth]{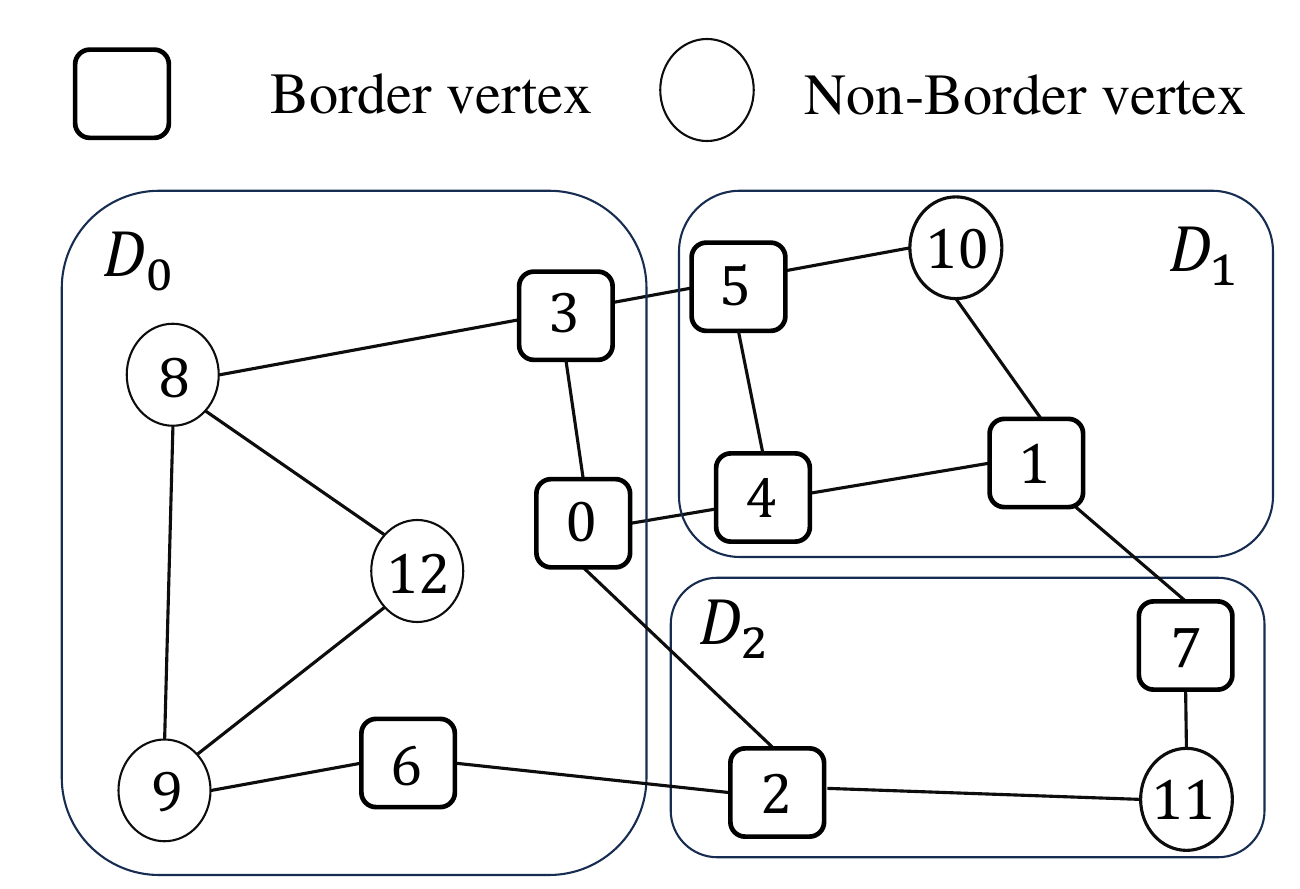}} 
  \label{fig:example1}
  }
  \subfigure{
  \scriptsize
  \begin{tabular}{|c|c|l|}
      \hline $D_i$ & $v$ & $\mathcal{B}$  \\ \hline
      \hline \multirow{5}{*}{$D_0$} & $v_{0}$ &  {${\langle}v_{0}, 0{\rangle}$}  \\
      \cline{2-3} & $v_{3}$ &  {${{\langle}v_{0}, 1{\rangle}}$}
      {${\langle}v_{3}, 0{\rangle}$}\\ 
      \cline{2-3} & $v_{6}$ &  {${\langle}v_{0}, 2{\rangle}$}
       {${\langle}v_{2}, 1{\rangle}$}
       {${\langle}v_{6}, 0{\rangle}$}\\
      \cline{2-3} & $v_{8}$ &  {${\langle}v_{0}, 2{\rangle}$}   {${\langle}v_{3}, 1{\rangle}$} 
      {${\langle}v_{6}, 2{\rangle}$}\\
      \cline{2-3} & $v_{9}$ &  {${\langle}v_{0}, 3{\rangle}$} 
      {${\langle}v_{2}, 2{\rangle}$} {${\langle}v_{3}, 2{\rangle}$} {${\langle}v_{6}, 1{\rangle}$} \\
      \hline \multirow{4}{*}{$D_1$} & $v_{1}$ &  {${\langle}v_{0}, 2{\rangle}$}  {${\langle}v_{1},0{\rangle}$}\\
      \cline{2-3} & $v_{4}$ &  {${\langle}v_{0}, 1{\rangle}$}
      {${\langle}v_{1}, 1{\rangle}$}
      {${\langle}v_{4}, 0{\rangle}$}\\
      \cline{2-3} & $v_{5}$ &  {${\langle}v_{0}, 2{\rangle}$}
      {${\langle}v_{1}, 2{\rangle}$}
      {${\langle}v_{3}, 1{\rangle}$}
      {${\langle}v_{4}, 1{\rangle}$}
      {${\langle}v_{5}, 0{\rangle}$}\\
      \cline{2-3} & $v_{10}$ &  {${\langle}v_{0}, 3{\rangle}$} {${\langle}v_{1}, 1{\rangle}$} {${\langle}v_{3}, 2{\rangle}$}
      {${\langle}v_{5}, 1{\rangle}$}\\
      \hline \multirow{3}{*}{$D_2$} & $v_{2}$ &  {${\langle}v_{0}, 1{\rangle}$}
      {${\langle}v_{2}, 0{\rangle}$}\\
      \cline{2-3}  & $v_{7}$ &  {${\langle}v_{0}, 3{\rangle}$}  {${\langle}v_{1}, 1{\rangle}$} 
      {${\langle}v_{2}, 2{\rangle}$} {${\langle}v_{7}, 0{\rangle}$} \\
       \cline{2-3} & $v_{11}$ &  {${\langle}v_{0}, 2{\rangle}$}
       {${\langle}v_{1}, 2{\rangle}$}
       {${\langle}v_{2}, 1{\rangle}$}
       {${\langle}v_{7}, 1{\rangle}$}\\
       \cline{2-3} & $v_{12}$ &  {${\langle}v_{0}, 3{\rangle}$}
       {${\langle}v_{2}, 3{\rangle}$}
       {${\langle}v_{3}, 2{\rangle}$}
       {${\langle}v_{6}, 2{\rangle}$}\\
      \hline 
      \end{tabular}}
      \setcounter{subfigure}{0}
      \subfigure[Road network and district]{\hbox to 0.5\textwidth{\vbox to 0mm{\vfil\null}\hfil}}
      \subfigure[Border Labeling Index]{\hbox to 0.49\textwidth{\vbox to 0mm{\vfil\null}\hfil}}
      \caption{An example of border labeling }
      \label{Fig::labelexample}
\end{figure}

\begin{definition}[District, $D_i$]
We decompose a network $G(V_G, E_G)$ into $m$ mutually exclusive districts $D_i(V_{D_i}, E_{D_i}), i\in[0,m)$, i.e., $\bigcup V_{D_i} = V_G$ and $V_{D_i} \cap V_{D_j} = \varnothing$ for any $i,j\in[0,m)$.
\end{definition}

\begin{definition}[Border Vertex Set of $D_i$, $B_i$]\label{def:border}
A vertex $b \in V_{D_i}$ is a border vertex of district $D_i$ if and only if there exists an edge $(b,v)\in E_G$ where $v \notin V_{D_i}$. The border vertex set of $D_i$ is denoted as $B_i$.
\end{definition}

\begin{example}
Figure~\ref{Fig::labelexample}(a) shows us an example of a network divided into 3 different districts $D_0$, $D_1$ and $D_2$. Vertex $v_0$ is a border of district $D_0$ and vertex $v_1$ is the border of district $D_1$.
\end{example}

\section{Border based Hub Pushing}\label{sec:method:BL}

\subsection{Border Pushing}

Suppose $G(V,E)$ has been divided into $m$ districts, i.e., $D_{0},D_{1},\ldots,D_{m-1}$, and each district has completed its 2-hop cover labeling construction $L_i$. It is crucial to emphasize that the label set $L_i$ can accurately answer distance queries between vertices $s$ and $t$, $\lambda(s,t,L_i)$, only if every single edge of the shortest path lies within the district $D_i$. To resolve this challenge, previous methods have suggested utilizing supplementary data structures such as G-tree\cite{GT15}, Hierarchical Graph Partition (HGP) tree\cite{DHL2023}, and Boundary Tree\cite{boundary22} or expensive online searching method\cite{par16} to help answer queries where shortest paths across different districts. However, these solutions not only raise the construction cost but also increase the query processing overhead.

To overcome the challenges, we propose a method called \textbf{border labeling} in this work. In the majority of hub pushing algorithms~\cite{PLL13,ye_order}, any individual vertex can serve as the {\em hub vertex} to satisfy the 2-hop cover property. However, the main focus of these algorithms is to minimize the size of the labeling set $L$, which may not be the most suitable approach for partitioning-based solutions. In contrast, our proposed solution strategically employs the border vertices as hub vertices to facilitate efficient traversal between vertices across different districts. This approach is based on the understanding that the borders naturally serve as bridges between the districts, and are necessary for queries across districts (since their shortest paths inevitably pass through the borders).

\begin{algorithm}[!htb]
\label{algo:BFP}
\LinesNumbered
\caption{Border Labeling}\label{alg:bl}
\KwIn{A list of borders $B$ to be pushed,  road network $G$}
\KwOut{Border labels $\mathcal{B}$}
Initialize an empty set $\mathcal{B}$ for border labels.

\For{$b\in B$}
{
    Create a new queue $Q$ with an initial element ${\langle}b,0{\rangle}$. 
    
    \While{$Q$ is not empty} 
    {
        Dequeue ${\langle}v,d{\rangle}$ from $Q$.
        
        \If{$\lambda(v,b,\mathcal{B}(b)) > d$} 
        {
            Update label  ${\langle}b,d{\rangle}$ onto $\mathcal{B}(v)$.
             
            \For{\textbf{each} unvisited neighbor vertex $u$ of $v$} 
            {
               Enqueue ${\langle}u,d+d_G(u,v){\rangle}$ into $Q$.
            }
        }
        \Else {
            \textbf{continue}\quad\tcp{Pruning}
        }
}
    
}

\end{algorithm}

Algorithm~\ref{algo:BFP} shows the pseudocode of our border pushing algorithm. Our border pushing algorithm is based on the 2-hop cover approach with hub pushing. Suppose the border set as $B = B_0 \cup \cdots \cup B_{m-1} = \{b_0, b_1, \ldots, b_{q-1}\}$. 
We start with an empty index $\mathcal{B}$, where $\mathcal{B}(u)$ indicates the border label of vertex $u$. 

We iteratively perform Dijkstra's algorithm with pruning from each border vertex, following the processing order determined by the global vertex order $\mathcal{O}$ as defined in the hub pushing algorithm~\cite{PLL13}. 
We apply the same pruning strategy when pushing the border vertex $b$. Specifically, we insert the label from $b$ to a reachable vertex only if $\lambda(b, u, \mathcal{B}(u)) > d_G(b, u)$. If this condition is not met, we stop traversing any edge from vertex $u$ and prune it, and we do not add $\langle b, d_G(b,u) \rangle$ to $\mathcal{B}(u)$.

\begin{theorem}\label{thm:border}
The border pushing algorithm can guarantee the correctness of the shortest path distance, i.e., $\lambda(s,t,\mathcal{B}) = d_G(s,t)$, under the condition that one of the following constraints is satisfied:
(1) $s \in B \wedge t \in B$ or (2) $s \in D_i \wedge t \in D_j$ where $i\neq j$.
 
\end{theorem}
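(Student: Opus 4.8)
The plan is to establish the two inequalities $\lambda(s,t,\mathcal{B})\ge d_G(s,t)$ and $\lambda(s,t,\mathcal{B})\le d_G(s,t)$ separately. The first (soundness) holds unconditionally and needs no case distinction: every entry $\langle b,d\rangle$ that the pruned Dijkstra writes into $\mathcal{B}(v)$ records the length of an actual $b$-to-$v$ walk, so $d\ge d_G(b,v)$; consequently, for any common hub $h\in\mathcal{B}(s)\cap\mathcal{B}(t)$ the stored values satisfy $d_G(s,h)+d_G(h,t)\ge d_G(s,t)$ by the triangle inequality, whence $\lambda(s,t,\mathcal{B})\ge d_G(s,t)$ (the case with no common hub being $\infty$). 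The real work is the reverse inequality, and this is where the restriction to cases (1) and (2) enters, through the fact that a border must lie on a shortest path.

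I would first prove that structural lemma. In case (1) it is immediate, since $s,t\in B$ are themselves borders. In case (2), with $s\in D_i$, $t\in D_j$, $i\neq j$, any $s$--$t$ path must leave $D_i$, i.e.\ traverse an edge $(b,v)\in E_G$ with $b\in V_{D_i}$ and $v\notin V_{D_i}$; by Definition~\ref{def:border}, $b\in B$. Hence the set $S=\{\,b\in B : b\text{ lies on some shortest }s\text{--}t\text{ path}\,\}$ is nonempty. I would then let $w$ be the element of $S$ with the smallest order value $\mathcal{O}$ (the border on a shortest path that is pushed \emph{earliest}), fix a shortest path $P$ through $w$, and split $P$ at $w$ into shortest sub-paths $w\rightsquigarrow s$ and $w\rightsquigarrow t$.

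Next I would adapt the PLL coverage argument to $w$. Tracing the pruned Dijkstra rooted at $w$ along $w\rightsquigarrow s$, the goal is to show that no pruning fires anywhere on this sub-path, so that $s$ is dequeued at exactly $d_G(w,s)$ (standard Dijkstra correctness along an unblocked shortest path) and $\langle w,d_G(w,s)\rangle$ is inserted into $\mathcal{B}(s)$; the symmetric claim yields $\langle w,d_G(w,t)\rangle\in\mathcal{B}(t)$. Granting this, $w$ is a common hub and $\lambda(s,t,\mathcal{B})\le d_G(s,w)+d_G(w,t)=d_G(s,t)$ because $w\in P$, which together with soundness closes the proof. To rule out pruning, suppose it fired at some vertex $x$ on $w\rightsquigarrow s$ (possibly $x=s$); since $w$ is processed after every border with smaller order and in-pass labels carry only $w$ itself, the offending hub is a genuinely earlier border $h$ with $\mathcal{O}(h)<\mathcal{O}(w)$ and $d_G(w,h)+d_G(h,x)=d_G(w,x)$. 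A triangle-inequality chase then promotes $h$ onto a shortest $s$--$t$ path:
\begin{equation}
d_G(w,h)+d_G(h,s)\le d_G(w,h)+d_G(h,x)+d_G(x,s)=d_G(w,x)+d_G(x,s)=d_G(w,s),
\end{equation}
so, with the reverse triangle bound, $h$ lies on a shortest $w\rightsquigarrow s$ path, and composing with $w\rightsquigarrow t$ shows $h$ lies on a shortest $s$--$t$ path. Thus $h\in S$ with $\mathcal{O}(h)<\mathcal{O}(w)$, contradicting the minimality of $w$; hence no pruning occurs and both insertions go through.

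The main obstacle I anticipate is making the pruning-to-contradiction step fully rigorous: one must argue that the hub $h$ responsible for pruning the pair $(w,x)$ is itself a border (which holds because only borders are ever pushed into $\mathcal{B}$, so every hub is a border) and that the equality chain above genuinely places $h$ on a shortest $s$--$t$ path rather than merely on a shortest $w$-to-$x$ path. Handling this cleanly requires reasoning about a single consistent shortest path obtained by concatenating $h$'s shortest route to $w$ (or to $s$) with the fixed sub-paths of $P$, and checking that the restriction of the pruning test to border labels does not break the inductive coverage that the earlier borders were supposed to supply. Everything else reduces to the standard exactness and monotonicity properties of pruned Dijkstra already used by the base hub-pushing scheme.
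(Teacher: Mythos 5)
Your proof is correct, but it reaches the theorem by a genuinely different route than the paper. The paper's proof is modular: it introduces the \emph{unpruned} border label set $\mathcal{B}'$, in which every vertex $v$ stores $\langle b, d_G(v,b)\rangle$ for \emph{every} border $b\in B$, invokes \cite[Theorem~4.1]{PLL13} as a black box to conclude $\lambda(s,t,\mathcal{B})=\lambda(s,t,\mathcal{B}')$, and then disposes of the two constraints in one line each on $\mathcal{B}'$ (constraint (1) via $\langle s,0\rangle\in\mathcal{B}'(s)$ and $\langle s,d_G(s,t)\rangle\in\mathcal{B}'(t)$; constraint (2) via the same ``a cross-district shortest path must pass a border'' observation you use, combined with the completeness of $\mathcal{B}'$). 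You instead reprove the pruning-preservation claim from first principles: your choice of the minimal-order border $w$ in $S$, the first-pruned-vertex setup along the fixed subpaths of $P$, and the triangle-inequality chase that promotes the offending hub $h$ onto a shortest $s$--$t$ path with $\mathcal{O}(h)<\mathcal{O}(w)$ is precisely the inductive core of PLL's Theorem~4.1, specialized to the border-restricted hub set. What the paper's route buys is brevity; what yours buys is self-containedness and, arguably, a patch of a loose point: PLL's theorem is stated for pushing \emph{all} vertices in order, whereas here only the subset $B$ is pushed, and your argument verifies explicitly that the induction closes within $S$ (every pruning hub is itself a border lying on a shortest $s$--$t$ path), which a literal citation leaves implicit. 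Two details of your write-up deserve credit for being handled correctly: you do not assume that stored label distances are exact after pruning in earlier passes (your contradiction only needs stored values to dominate true distances, which holds since every stored value is the length of an actual walk), and you rule out $h=w$ by noting that the pruning test at $x$ precedes the insertion of any in-pass label $\langle w,\cdot\rangle$ into $\mathcal{B}(x)$.
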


\begin{proof}
Suppose we obtain $\mathcal{B}$ by executing Algorithm~\ref{alg:bl} with borders pushed in the order of $b_0,...,b_{q-1}$. The border label without applying the pruning strategy, denoted as $\mathcal{B}'$, makes it easier to prove Theorem \ref{thm:border}, as it includes distances of all border vertices in each vertex: 
\begin{equation}
\label{eq:allforone}
    \mathcal{B}'(v) = \{{\langle}b_{0}, d_G(v,b_{0}){\rangle},..., {\langle}b_{q-1}, d_G(v,b_{q-1}){\rangle} \},\quad\forall v\in G.
\end{equation}
Thus we try to use $\mathcal{B}'$ as a bridge to prove. 
Due to   $ \lambda(s,t,\mathcal{B}) = \lambda(s,t,\mathcal{B}')$ by\cite[Theorem~4.1]{PLL13}, we can prove $\lambda(s,t,\mathcal{B}) = d_G(s,t)$ by proving $\lambda(s,t,\mathcal{B}') = d_G(s,t)$.

We then prove $\lambda(s,t,\mathcal{B}') = d_G(s,t)$ by considering the two constraints separately.

For Constraint 1 ($s \in B \wedge t \in B$), as ${\langle}s, 0{\rangle} \in \mathcal{B}'(s)$  and ${\langle}s, d_G(s,t){\rangle} \in \mathcal{B}'(t)$, we will have
\begin{equation}
    \lambda(s,t,\mathcal{B}') = 0 +  d_G(s,t) = d_G(s,t).
\end{equation}

For Constraint 2 ($s \in D_i \wedge t \in D_j$ where $i\neq j$), as an axiom, the shortest path corresponding to $d_G(s,t)$ must pass through at least one vertex in $B$, which means $d_G(s,t) = \min_{b \in B}(d_G(s,b) + d_G(b,t))$.
According to (\ref{eq:allforone}), \begin{equation}
\langle b,d_G(s,b) \rangle \in\mathcal{B}'(s) \wedge \langle b,d_G(t,b) \rangle \in\mathcal{B}'(t),\quad\forall b\in B.
\end{equation}
So 
\begin{equation}
\begin{aligned}
  \lambda(s,t,\mathcal{B}') = \min_{b \in B}(d_G(s,b) + d_G(b,t)) =d_G(s,t).
\end{aligned}
\end{equation}

If any of the constraints is satisfied, $\lambda(s,t,\mathcal{B}') = d_G(s,t)$, which leads to 
 $\lambda(s,t,\mathcal{B}) = d_G(s,t)$.

\end{proof}

\begin{figure}[!hbt]
   
    \centering
    \subfigure[Pushing from $v_0$.]{\includegraphics[width=0.325\textwidth]{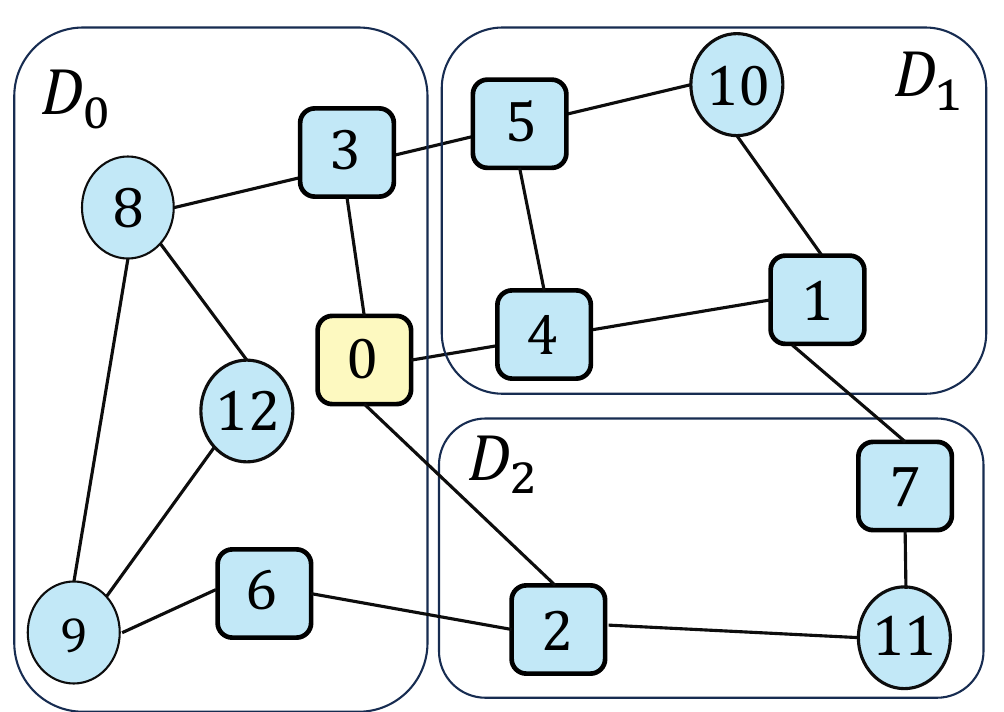}} 
    \subfigure[Pushing from $v_1$.]{\includegraphics[width=0.325\textwidth]{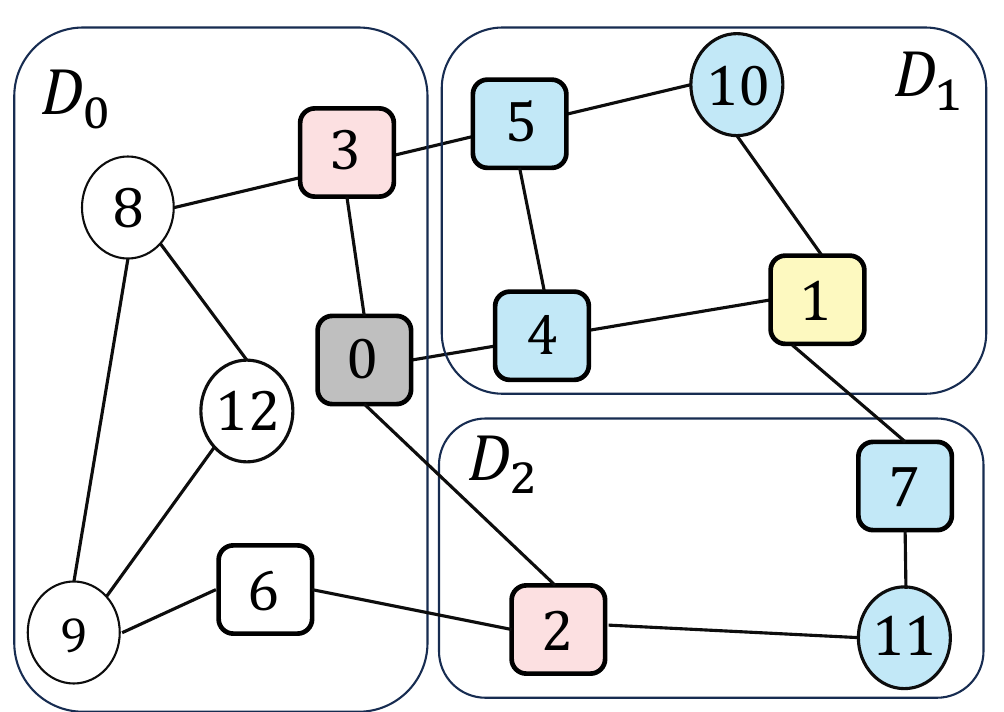}}
    \subfigure[Pushing from $v_2$.]{\includegraphics[width=0.325\textwidth]{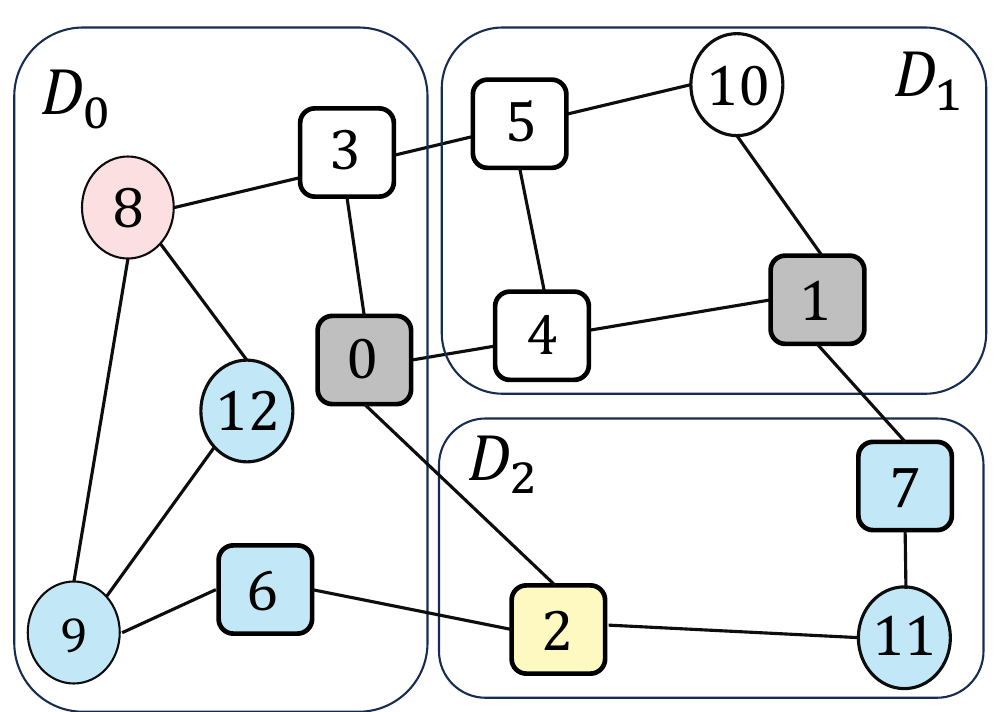}}
    
    \subfigure[Vertices covered by existing labels were pruned.]{\includegraphics[width=0.325\textwidth]{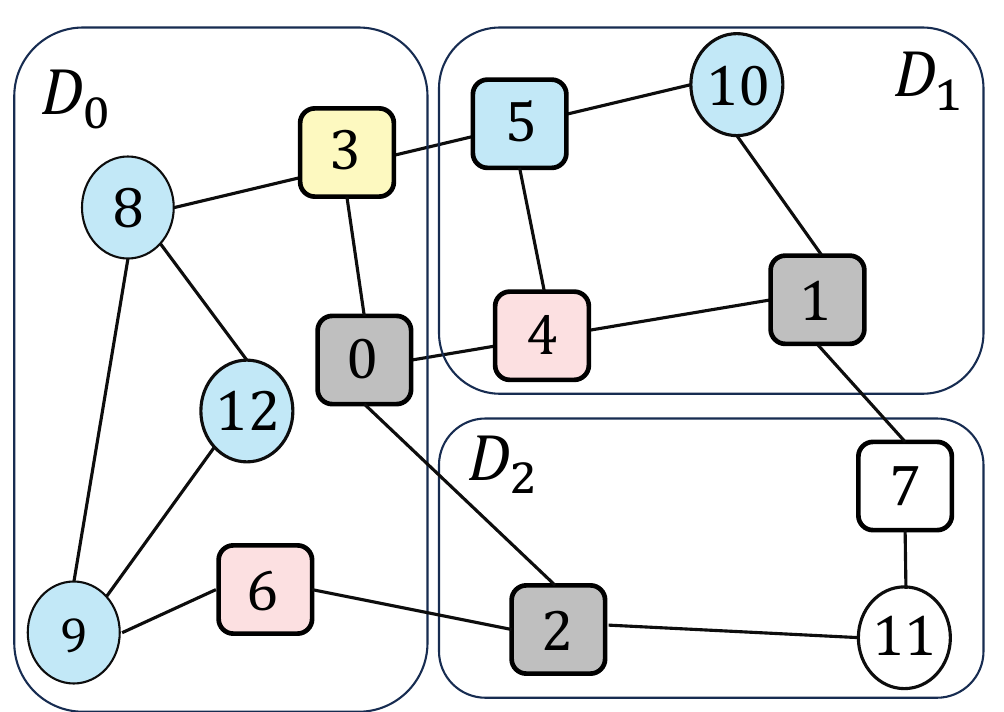}}
      \subfigure[Several borders pushed, the search space limited. ]{\includegraphics[width=0.325\textwidth]{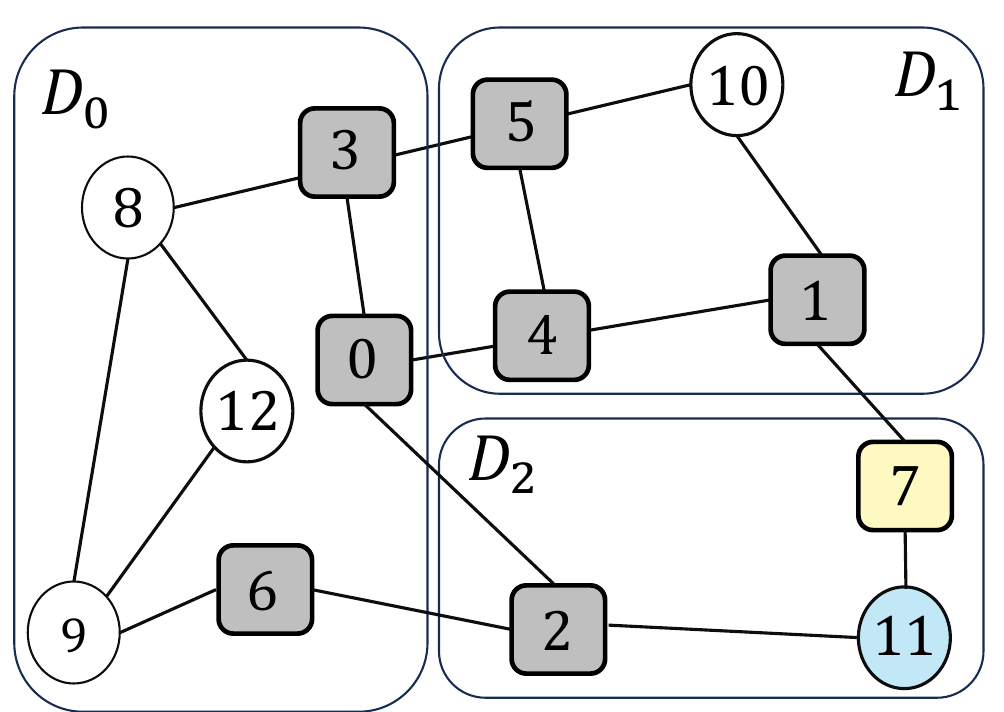}} 
    \subfigure[Our algorithm stopped as we pushed all the borders.]{\includegraphics[width=0.325\textwidth]{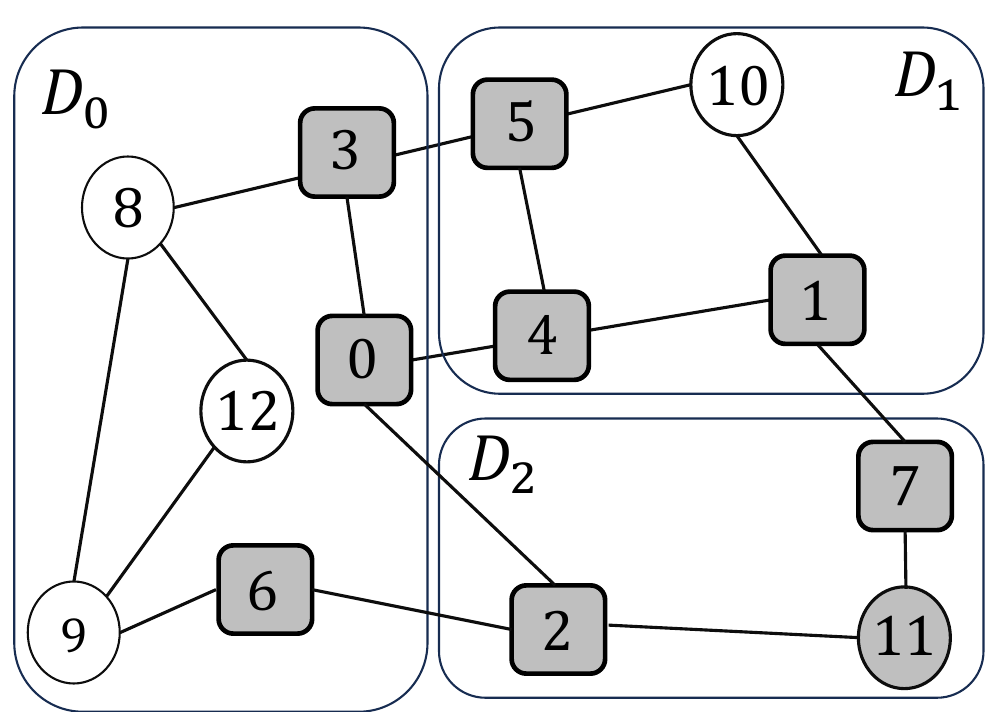}}
  
\caption{Examples of the border labeling algorithm.}
\label{fig::algobl}
\end{figure}

\begin{example}
The border label set $\mathcal{B}$ of the running example (Figure~\ref{fig::algobl}) is shown in Figure~\ref{Fig::labelexample}(b). We can use the border label set $\mathcal{B}$ to answer queries $query(s,t)$ for vertices $s$ and $t$ located in different districts. For example, we can answer $query(v_{11},v_{12})= 4$ using the labels {${\langle}v_{2}, 1{\rangle}$} in $v_{11}$ and {${\langle}v_{2}, 3{\rangle}$} in $v_{12}$, and $query(v_9,v_{10})= 4 $ can be answered by the labels {${\langle}v_{3}, 2{\rangle}$} in $v_9$ and {${\langle}v_{3}, 2{\rangle}$} in $v_{10}$. Additionally, the border labeling algorithm allows us to answer queries involving border vertices in the same or different districts. For example, $query(v_{0},v_{6})= 2 $ can be answered by the labels {${\langle}v_{0}, 0{\rangle}$} in $v_{0}$ and {${\langle}v_{0}, 2{\rangle}$} in $v_{6}$, and $query(v_5,v_{7})=3 $ can be answered by the labels {${\langle}v_{1}, 2{\rangle}$} in $v_5$ and {${\langle}v_{1}, 1{\rangle}$} in $v_{7}$.
\end{example}

\subsection{Border Auxiliary Shortcuts}

Obviously, the border labels $\mathcal{B}$ cannot be utilized to answer a $query(s,t)$ where the non-border origin $s$ and non-border destination $t$ are located in the same district $D_i$. For example, the $query(v_8,v_9)$ cannot be correctly answered. We will then elaborate on how to use district-index techniques to respond to such queries and how the border label set assists in constructing the district-index.

According to Theorem~\ref{thm:border}, our border labeling algorithm is precisely able to compute the correct global shortest distances between any two borders efficiently. 
To calculate the distance between two interior vertices, our approach includes the introduction of auxiliary shortcut edges for each pair of borders $(b_i, b_j, \lambda(b_i, b_j, \mathcal{B}))$ within district $D_i$. This results in the creation of a new set of districts called $D^+=\{D^+_0,...,D^+_i,...\}$. Subsequently, we utilize the standard hub pushing technique (e.g., \cite{PLL13}) to construct the label index $L^+_i$ for district $D^+_i$.

The correctness of our approach is demonstrated as follows.

\begin{theorem}\label{shortcut correctness}
When $s$ and $t$ are in same district $D_i$, $d_{G}(s,t) = \lambda(s,t,L^+_i)$.
\end{theorem}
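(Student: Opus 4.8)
The plan is to show that the augmented district $D^+_i$, equipped with its auxiliary shortcut edges, preserves the global shortest distance between any two interior vertices $s,t$ of $D_i$, and then invoke the correctness of standard hub pushing on $D^+_i$. The key observation is that any shortest path from $s$ to $t$ in $G$ either stays entirely inside $D_i$, or leaves $D_i$ at some border and re-enters at another border. In the first case the path is already present in $D^+_i$ (which contains all original intra-district edges), so $d_{D^+_i}(s,t) = d_G(s,t)$ trivially. In the second case, the excursion outside $D_i$ begins and ends at border vertices $b_i, b_j \in B_i$, and the length of that excursion equals $d_G(b_i,b_j)$; by Theorem~\ref{thm:border} (Constraint 1, both endpoints are borders) this global border-to-border distance is exactly $\lambda(b_i,b_j,\mathcal{B})$, which is precisely the weight of the shortcut edge $(b_i,b_j)$ we inserted. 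Hence the detour can be replaced by the shortcut without changing the total length.

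First I would formalize the claim $d_{D^+_i}(s,t) = d_G(s,t)$ as two inequalities. For $d_{D^+_i}(s,t) \ge d_G(s,t)$, I would note every edge of $D^+_i$ — whether an original edge or a shortcut $(b_i,b_j,\lambda(b_i,b_j,\mathcal{B}))$ — corresponds to a genuine path in $G$ of equal or greater length (the shortcut weight equals a real shortest distance in $G$), so no path in $D^+_i$ can be shorter than the true shortest path in $G$. For the reverse inequality $d_{D^+_i}(s,t) \le d_G(s,t)$, I would take an optimal $s$--$t$ path $P$ in $G$ and decompose it into maximal segments that lie inside $D_i$ and maximal segments that lie outside. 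Each outside segment runs between two border vertices $b,b'$ of $D_i$; replacing it by the single shortcut edge of weight $\lambda(b,b',\mathcal{B}) = d_G(b,b')$ yields a walk in $D^+_i$ whose total length is at most the length of $P$. This constructs an $s$--$t$ walk in $D^+_i$ of length $d_G(s,t)$, establishing the bound.

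Having shown $d_{D^+_i}(s,t) = d_G(s,t)$, I would then apply Theorem~4.1 of~\cite{PLL13}: since $L^+_i$ is produced by standard (pruned) hub pushing on $D^+_i$, it is a 2-hop cover of $D^+_i$, meaning $\lambda(s,t,L^+_i) = d_{D^+_i}(s,t)$ for all $s,t \in V_{D_i}$. Chaining the two equalities gives $\lambda(s,t,L^+_i) = d_{D^+_i}(s,t) = d_G(s,t)$, which is the desired conclusion.

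The main obstacle I anticipate is the careful handling of the path-decomposition argument — specifically, verifying that every maximal outside excursion of the optimal path genuinely enters and exits $D_i$ at border vertices (this follows from Definition~\ref{def:border}, since crossing the boundary of $D_i$ forces the transition edge to touch a border), and that a shortcut of weight $d_G(b,b')$ never overcounts, since $d_G(b,b')$ may be shorter than the excursion's actual length, which only helps the $\le$ direction. A secondary subtlety is confirming that the shortcut weights equal true global distances rather than merely intra-district ones; this is exactly what Theorem~\ref{thm:border} guarantees, so the correctness of the present theorem rests essentially on the border-to-border correctness already established.
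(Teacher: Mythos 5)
Your proof is correct, and it follows the same high-level strategy as the paper's own proof --- first establish $d_{D^+_i}(s,t) = d_G(s,t)$, then invoke the 2-hop-cover correctness of (pruned) hub pushing~\cite{PLL13} on the augmented district $D^+_i$ --- but your execution is genuinely more general and more complete. The paper splits into the same two cases (shortest path entirely inside $D_i$ versus a path with an external segment), but its Case B implicitly assumes the shortest path makes a \emph{single} excursion through exactly one pair of borders $b_m, b_n$, writing $d_G(s,t) = d_G(s,b_m) + d_G(b_m,v_k) + d_G(v_k,b_n) + d_G(b_n,t)$. Your decomposition into maximal inside/outside segments handles any number of excursions: each excursion, being a subpath of a shortest path, has length exactly $d_G(b,b')$, which by Theorem~\ref{thm:border} (Constraint 1) equals the inserted shortcut weight $\lambda(b,b',\mathcal{B})$, so every excursion can be contracted to a single edge of $D^+_i$. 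You also prove the direction $d_{D^+_i}(s,t) \ge d_G(s,t)$ explicitly --- that shortcuts cannot create spuriously short paths, because each shortcut weight is the length of a genuine path in $G$ --- whereas the paper exhibits a walk in $D^+_i$ of length $d_G(s,t)$ and concludes equality without ruling out a shorter one. In short, your two-inequality framing and maximal-segment decomposition repair the paper's single-excursion simplification and its tacit lower bound, while resting on the same two pillars: border-to-border correctness of $\mathcal{B}$ and standard hub-pushing correctness on $D^+_i$.
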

\begin{proof}

Let us assume that we have two vertices, $s \in D_i$ and $t \in D_i$, and the shortest path between them passes through a set of vertices denoted as $V_{sp}$.

(Case A) Every edge along the shortest path lies exclusively within the district $D_i$, implying that $V_{sp} \subseteq V_{D_i}$. For any edge $(v_i, v_j)$ on the shortest path, where $v_i \in V_{sp}$ and $v_j \in V_{sp}$, we have $d_{G}(v_i, v_j) = d_{D_i}(v_i, v_j)$. Additionally, considering $d_{D^+_i}(v_i, v_j) = d_{G}(v_i, v_j)$, we can conclude that $d_{D_i}(v_i, v_j) = d_{D^+_i}(v_i, v_j)$. Consequently, it follows that $\lambda(s, t, L^+_i) = \lambda(s, t, L_i) = d_{G}(s, t)$.

(Case B) The shortest path has a segment outside $D_i$ or say at least exist a vertex $v_k \in V_{sp}$ and $v_k \notin D_i$. Let $d_{G}(s,t) = d_{G}(s,v_k) + d_{G}(v_k,t) <  \lambda(s,t,L_i)$. As $\lambda(s,t,L_i)$ is local shortest distance of $D_i$, $d_{D_i}(s,t) = \lambda(s,t,L_i)$. So we have
\begin{equation}
    d_{G}(s,t) = d_{G}(s,v_k) + d_{G}(v_k,t) <  d_{D_i}(s,t).
\end{equation}

Under this situation, the shortest path must pass at least two inner borders of $D_i$, suppose they are  $b_m \in V_{sp}$ and $b_n \in V_{sp}$:
\begin{equation}
    d_{G}(s,t) = d_{G}(s,b_m) + d_{G}(b_m,v_k) + d_{G}(v_k,b_n) + d_{G}(b_n,t).
\end{equation}

Given the fact that $D^+_i$ include the shortcuts of every border vertices. Thereby, we have
\begin{align}
    d_{D^+_i}(b_m,b_n) &= d_{G}(b_m,v_k) + d_{G}(v_k,b_n),\\
    d_{D^+_i}(s,b_m) &= d_{G}(s,b_m),\\
    d_{D^+_i}(b_n,t) &= d_{G}(b_n,t).
\end{align}

In conclusion, we obtain $d_{D^+_i}(s, t) = d_{G}(s, t)$. As per the definition, we have $d_{G}(s, t) = \lambda(s, t, L^+_i)$.

\end{proof}

\section{Edge Computing Environments}\label{sec:enivornments}

In this section, we will delve into the integration of our border labeling technique within edge computing environments. Our aim is to enhance the responsiveness of user queries by promoting efficient collaboration among the devices in the system, particularly in {\em real-world data update scenarios}.

\subsection{System Architecture}\label{sec:System}


The IBM blog posts ``Architecting at the Edge''~\cite{ibm} and Microsoft Azure's IoT platform \cite{microsoft} provides insightful information on edge computing, covering topics such as devices, servers, benefits, challenges, and applications. Recent research affirms that edge computing brings computation and data storage closer to the network edge, resulting in faster response times, reduced latency, and improved real-time decision-making \cite{TMC4,TMC1}. It also reduces bandwidth usage and enhances privacy and reliability by processing data locally on edge devices or servers \cite{TMC1}.

Our system, based on edge computing architecture, incorporates with multiple layers of computational resources. The topmost layer, known as the computing center layer, utilizes border labels to handle queries across different districts within a city. It oversees and manages the lower layer, which consists of edge servers responsible for specific geographical regions. Each edge server builds a local index and handles queries within its designated region. The size of each region may vary based on factors such as geographical division, networking technology, and computing power. The lowest layer comprises end-user devices like smartphones or in-vehicle navigation systems, connected to the edge servers via 5G. User queries are submitted from the end-user device to the connected edge server, and we elaborate on the query answering process in Section~\ref{sec:method:distance}.

\subsection{Query Processing in Edge Computing Environments}\label{sec:method:distance}

In real-world scenarios, road networks are subject to dynamic updates~\cite{dasfaa23,UE, DBLP:conf/icde/ZhangLHMCZ21, DBLP:conf/sigmod/WeiWL20,dynamic1}. When a significant portion of edges in a city road network undergo frequent updates, it becomes crucial to accurately and efficiently respond to client distance queries. To address this challenge, we leverage edge computing techniques. This involves the collaboration between the cloud computing center and edge servers, where they independently perform computations to construct indexes and handle client distance queries. This collaborative effort ensures the prompt and accurate processing of queries, even in scenarios where the road network undergoes frequent updates.
 
The edge server responsible for the district where the client is located receives the user's distance query request through a 5G signal. Subsequently, the edge server determines whether it can directly handle the distance query or if it needs to forward it to the cloud computing center. This decision is based on the following rules, which govern its actions:

\begin{enumerate}[(1)]
\item \textbf{Origin $s$ and destination $t$ are both located within the district hosted by this edge server.} The edge server will handle the distance query directly.

\item \textbf{Origin $s$ and destination $t$ are both located within a district hosted by another edge server.} The query will be forwarded to that edge server through the computing center. In this scenario, the computing center serves as a forwarding agent, facilitating the communication between the edge servers and ensuring that the query reaches the appropriate server.

\item \textbf{Origin $s$ and destination $t$ are in different districts.} The query will be forwarded to the cloud computing center for handling. In other words, the cloud computing center takes charge of processing and responding to queries that involve vertices from multiple districts.
\end{enumerate}

We explain how the computing center and the edge servers work in detail.

\stitle{Computing Center}
After a period (e.g., one minute), the computing center will ask the edge servers to provide the new traffic situation, including the new edge weights collected from edge IoT devices such as smart surveillance cameras.
The computing center undertakes the reconstruction of the border label $\mathcal{B}$. Once $\mathcal{B}$ is constructed, the computing center becomes capable of responding to client queries that are forwarded to it via $\mathcal{B}$. It then relays the answer back to the respective edge server, acting as an intermediary to relay the response to the user. Simultaneously, the computing center is responsible for forwarding the \textit{Border Auxiliary Shortcuts} to the corresponding edge servers, ensuring efficient and accurate query processing throughout the system.

\stitle{Edge Servers} Each edge server actively collects updated traffic information, which includes new edge weights obtained from edge IoT devices such as smart surveillance cameras. This continuous data collection ensures that the edge servers have the most up-to-date information regarding the traffic situations within their respective districts. 
According to Theorem~\ref{shortcut correctness}, edge server can correctly answer the shortest distance only if receiving all its \textit{Border Auxiliary Shortcuts} from the computing center. In addition, we acknowledge the possibility that the construction of $\mathcal{B}$ from the computing center may take a longer time, despite its relatively shorter computational overhead compared to existing approaches~\cite{tmc5}. This delay could result in some user queries remaining unanswered. To address this concern, we have realized the potential of utilizing a local bound approach. By leveraging the local label index of each district itself $L_i$, we can respond to certain queries without relying on the completion of $\mathcal{B}$. This approach allows for timely responses to queries and mitigates the risk of query delays.

\begin{definition}[Local Bound, $LB(s,t,L_i,B_i$]\label{def:MED}
The local bound of the vertices $s,t \in D_i$ is defined as following:
\begin{equation}\label{eq:MED}
LB(s,t, L_i, B_i)=\min_{b_{i, j} \in B_i} \lambda(s,b_i,L_i) + \lambda(b_j,t,L_i).
\end{equation}
\end{definition}

\begin{theorem}
If $\lambda(s,t,L_i) \leq LB(s,t,L_i,B_i)$,  $d_G = \lambda(s,t,L_i)$.
\end{theorem}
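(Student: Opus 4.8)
The plan is to argue by contradiction, leaning on the fact that the local query $\lambda(s,t,L_i)$ equals the \emph{district-internal} distance $d_{D_i}(s,t)$ (because $L_i$ is a 2-hop cover of the subgraph $D_i$), whereas the true answer $d_G(s,t)$ can only be smaller. Indeed, one always has $d_G(s,t) \le \lambda(s,t,L_i)$, since any path confined to $D_i$ is also a path in $G$. Hence it suffices to rule out the strict inequality $d_G(s,t) < \lambda(s,t,L_i)$ under the hypothesis $\lambda(s,t,L_i) \le LB(s,t,L_i,B_i)$; together with $d_G(s,t)\le\lambda(s,t,L_i)$ this forces equality.

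First I would suppose, for contradiction, that $d_G(s,t) < \lambda(s,t,L_i) = d_{D_i}(s,t)$. Since the district-internal distance strictly exceeds the global one, the global shortest path $P$ cannot remain inside $D_i$; it must visit at least one vertex outside $V_{D_i}$. I would then locate the crossing borders: by Definition~\ref{def:border}, any edge of $P$ leaving $D_i$ has its tail in $B_i$, and any edge re-entering $D_i$ has its head in $B_i$. Let $b_i \in B_i$ be the last vertex of $P$ inside $D_i$ before $P$ first exits, and $b_j \in B_i$ the first vertex of $P$ inside $D_i$ after $P$ last re-enters. The prefix of $P$ from $s$ to $b_i$ and the suffix from $b_j$ to $t$ then lie entirely within $D_i$, so (again by the 2-hop cover property of $L_i$) their lengths are at least $d_{D_i}(s,b_i)=\lambda(s,b_i,L_i)$ and $d_{D_i}(b_j,t)=\lambda(b_j,t,L_i)$ respectively.

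Decomposing the length of $P$ into prefix, middle, and suffix and discarding the nonnegative middle segment yields
\begin{equation}
 d_G(s,t) \ge \lambda(s,b_i,L_i) + \lambda(b_j,t,L_i) \ge LB(s,t,L_i,B_i),
\end{equation}
where the final inequality is exactly the minimality in the definition of $LB$. Combined with the hypothesis $LB(s,t,L_i,B_i) \ge \lambda(s,t,L_i)$, this gives $d_G(s,t) \ge \lambda(s,t,L_i)$, contradicting $d_G(s,t) < \lambda(s,t,L_i)$. Therefore no excursion outside $D_i$ can shorten the path, and $d_G(s,t) = \lambda(s,t,L_i)$, as claimed.

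I expect the main obstacle to be the bookkeeping in the crossing-border step: rigorously selecting the first exit border and the last re-entry border (allowing for $P$ to cross the boundary several times, and even for $b_i=b_j$), and arguing that the corresponding prefix and suffix are genuinely confined to $D_i$ so that $L_i$ provides valid lower bounds on them. One must also read $LB$ as a minimum over \emph{independent} choices of $b_i,b_j\in B_i$, so that the particular exit/re-entry pair produced by $P$ is automatically bounded below by $LB$.
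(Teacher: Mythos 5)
Your proof is correct, and in fact there is nothing in the paper to compare it against: the paper states this theorem without any proof at all, so your argument fills a genuine gap rather than deviating from an official route. The structure is the natural one and is sound: $d_G(s,t)\le \lambda(s,t,L_i)$ since $L_i$ is a 2-hop cover of $D_i$ and every path in $D_i$ is a path in $G$; assuming strict inequality forces the global shortest path to leave $D_i$; the first-exit vertex $b_i$ and last-re-entry vertex $b_j$ are borders by Definition~\ref{def:border}; the prefix and suffix are confined to $D_i$ (your choice of \emph{first} exit and \emph{last} re-entry is exactly what guarantees this, and it also handles multiple crossings and the case $b_i=b_j$); discarding the nonnegative middle segment gives $d_G(s,t)\ge \lambda(s,b_i,L_i)+\lambda(b_j,t,L_i)\ge LB(s,t,L_i,B_i)\ge \lambda(s,t,L_i)$, a contradiction. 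Your closing remark about reading $LB$ as a minimum over \emph{independent} choices $b_i,b_j\in B_i$ deserves emphasis: the paper's notation $\min_{b_{i,j}\in B_i}$ in Definition~\ref{def:MED} is ambiguous, and the theorem is actually false under the single-border reading $\min_{b\in B_i}\bigl(\lambda(s,b,L_i)+\lambda(b,t,L_i)\bigr)$, since a global path may exit at one border and re-enter at a different one, in which case only the pairwise minimum lower-bounds $d_G(s,t)$ while the single-border hypothesis is weaker and easier to satisfy. So your reading is not merely a convenience but the only one under which the statement holds; the remaining implicit assumptions (nonnegative edge weights, $\lambda(\cdot,\cdot,L_i)=d_{D_i}(\cdot,\cdot)$) are supplied by the paper's setting.
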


\begin{figure}[!hbt]
    \begin{minipage}[b]{.99\linewidth}
    \centering
    \subfigure[]{\includegraphics[width=0.243\textwidth]{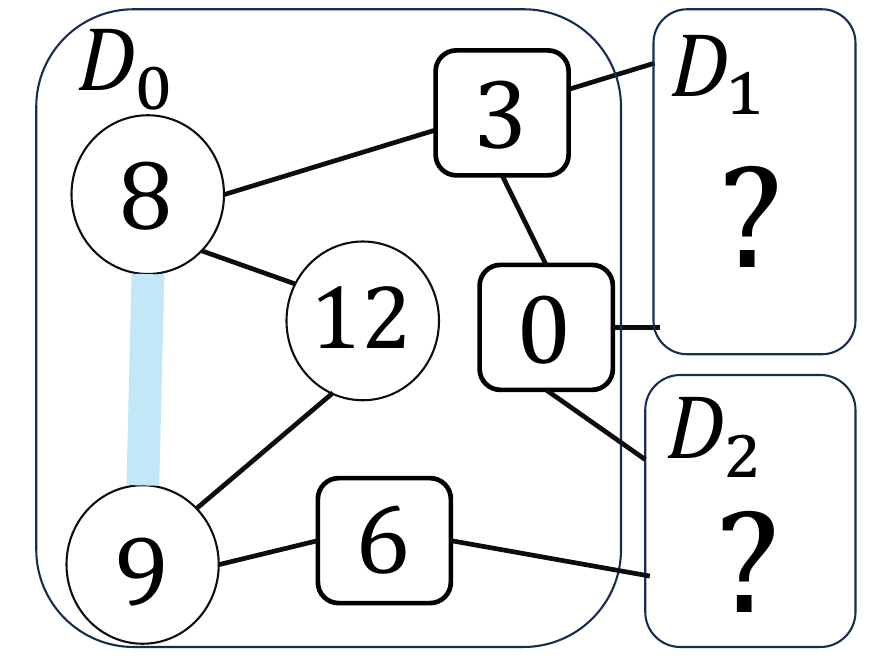}} 
    \subfigure[]{\includegraphics[width=0.243\textwidth]{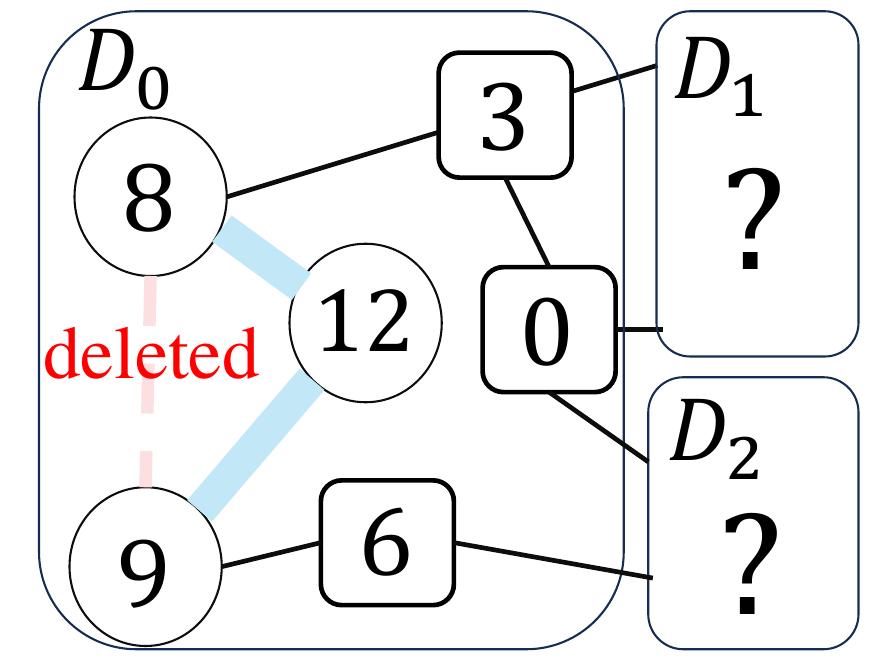}}
     \subfigure[]{\includegraphics[width=0.243\textwidth]{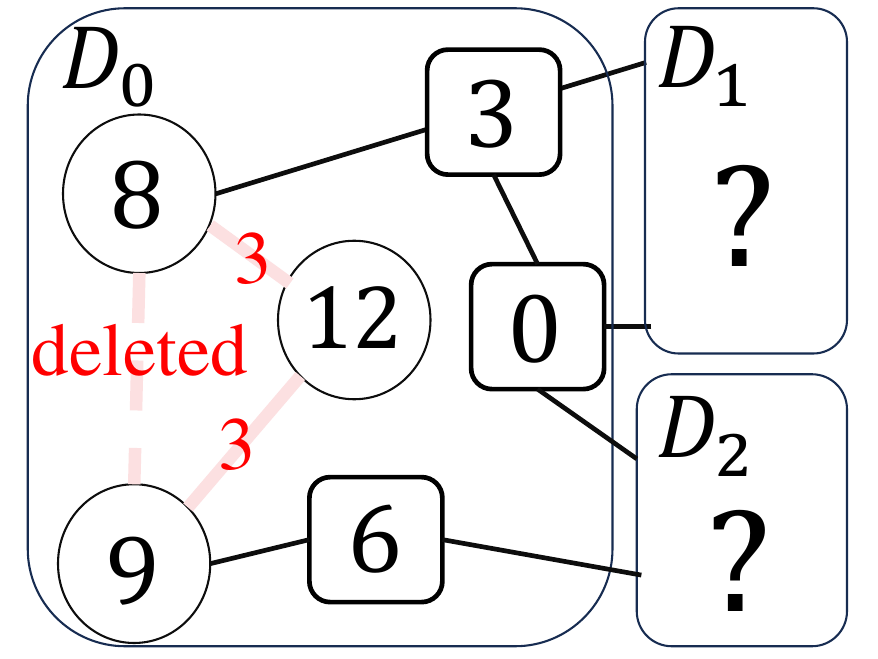}}
     \subfigure[]{\includegraphics[width=0.243\textwidth]{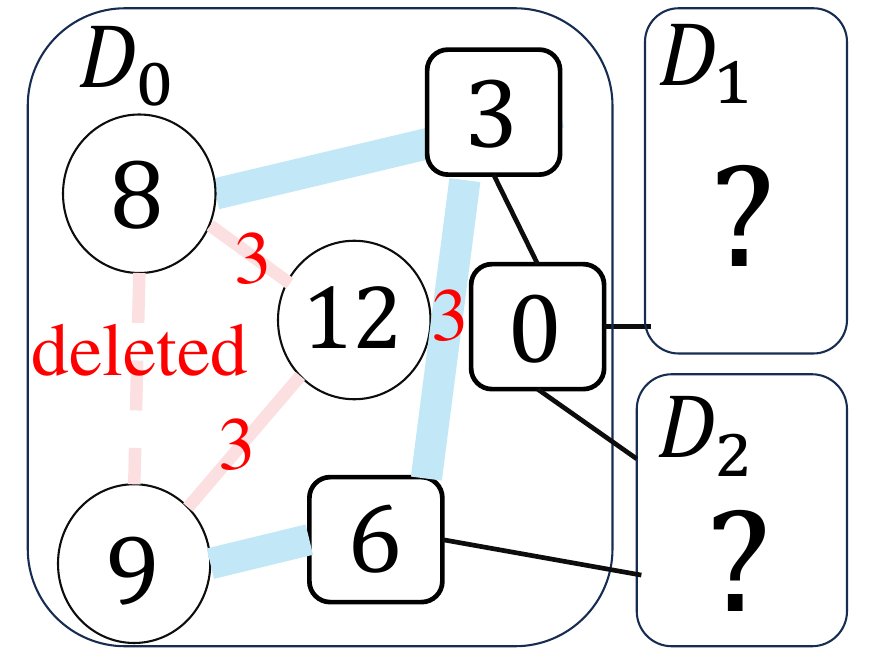}}

\end{minipage}\hfill
  \begin{minipage}[b]{.48\linewidth}
     \scriptsize
     \centering
     \begin{tabular}{|c|l|}
      \hline  $(c)$ & {\bf Intra Index $L_0$}\\ \hline
      \hline $v_{0}$ &  {${\langle}v_{0}, 0{\rangle}$}  \\
      $v_{3}$ &   {${\langle}v_{0}, 1{\rangle}$}
      {${\langle}v_{3}, 0{\rangle}$}\\
        $v_{6}$ &  {${\langle}v_{0}, 9{\rangle}$} {${\langle}v_{3}, 8{\rangle}$}
        {${\langle}v_{6}, 0{\rangle}$}\\ 
       $v_{8}$ &  {${\langle}v_{0}, 2{\rangle}$}
       {${\langle}v_{3}, 1{\rangle}$}
       {${\langle}v_{6}, 7{\rangle}$} 
       {${\langle}v_{8}, 0{\rangle}$} \\
       $v_{9}$ &  {${\langle}v_{0}, 8{\rangle}$} 
      {${\langle}v_{3}, 7{\rangle}$} {${\langle}v_{6}, 1{\rangle}$} 
      {${\langle}v_{8}, 6{\rangle}$}
      {${\langle}v_{9}, 0{\rangle}$}
      \\
      $v_{12}$  &
      ...\\

      \hline 
      \end{tabular}
      
  \end{minipage}\hfill
  \begin{minipage}[b]{.43\linewidth}
      \scriptsize
      \begin{tabular}{|c|l|}
      \hline  $(d)$ & {\bf Intra shortcut Index $L^+_0$}\\ \hline
      \hline $v_{0}$ &  {${\langle}v_{0}, 0{\rangle}$}  \\
      $v_{3}$ &   {${\langle}v_{0}, 1{\rangle}$}
      {${\langle}v_{3}, 0{\rangle}$}\\
       $v_{6}$ &  {${\langle}v_{0}, 4{\rangle}$} {${\langle}v_{3}, 3{\rangle}$} {${\langle}v_{6}, 0{\rangle}$} \\
       $v_{8}$ &  {${\langle}v_{0}, 2{\rangle}$} {${\langle}v_{3},1{\rangle}$} {${\langle}v_{8}, 0{\rangle}$} \\
       $v_{9}$ &  {${\langle}v_{0}, 5{\rangle}$} 
      {${\langle}v_{3}, 4{\rangle}$} {${\langle}v_{6}, 1{\rangle}$} {${\langle}v_{9}, 0{\rangle}$} \\
      $v_{12}$ &  ...\\

      \hline 
      \end{tabular}
  \end{minipage}
  \caption{Example of Index in intra-query processing under dynamic scenario. The blue curve denotes the shortest path.}
  \label{fig:dynamic}
\end{figure}
\section{Performance Studies}\label{sec:exp}
In real-world GIS service applications the capacity to rapidly respond to client queries based on the latest traffic information is a critical metric. However, before conducting tests on real dynamic scenarios, it is necessary to evaluate the fundamental static performance of various algorithms. For instance, some algorithms may offer remarkable speed but huge storage, making it impractical in industry scenarios. For this reason, we evaluate the construction and response time, as well as the index size on 10 different scale road networks from a public dataset\footnote{\url{https://www.diag.uniroma1.it/challenge9/index.shtml}}. 

All algorithms were implemented with C++ and compiled by g++ with -O3 flag. All experiments were conducted on a Linux Sever in 64-bit Ubuntu 22.04.3 LTS with 2 Intel Xeon E5-2696v4 and 128GB main memory. We omitted the result of a method if it ran out of memory of our machine or did not terminate within 1 hours and denote them as \textbf{MLE} (memory-limit exceeded) and \textbf{TLE} (time-limit exceeded).  We use a 32-bit integer to represent a vertex ID or a distance value in the index. We report the index size for each evaluated method. For hub pushing based techniques, each label is a 2-tuple $\langle hub,dist\rangle$ for those solutions only applicable to distance queries. Our codes can be found in \url{https://anonymous.4open.science/r/Submit-anonymously/}.

\begin{table}[htbp]
\centering
\caption{Road Networks}
\label{tbl:dataset}
\resizebox{1\textwidth}{!}{ 
\begin{tabular}{|c|c|c|c|c|c|c|c|}
\hline
Graph Name & $|V|$ & $|E|$ & Size & Graph Name & $|V|$ & $|E|$ & Size \\ \hline
	New York City (NY) & 264K & 773K & $3.6$MB & San Francisco Bay Area (BAY) & 321K & 800K & $4.0$MB \\\hline
Colorado (COL) & 436K & 1M & 5.6MB & Florida (FLA) & 1M & 2.7M & 14MB \\\hline
Northwest USA (NW) & 1.2M & 2.8M & 16MB & Northeast USA (NE) & 1.5M & 3.9M & 21MB \\\hline
California and Nevada (CAL)	
 & 1.9M & 4.7M & 26MB & Great Lakes (LKS) & 2.8M & 6.9M & 39MB \\\hline
 	Eastern USA (E)	
 & 3.5M & 8.8M & 50MB & Western USA (W) & 6.2M & 15.2M & 88MB \\\hline
\end{tabular}
}
\label{table:graphs}
\end{table}

\subsection{Algorithm Performance Study}

\begin{figure}[!hbt]
  \centering
  \includegraphics[width=1.0\linewidth]{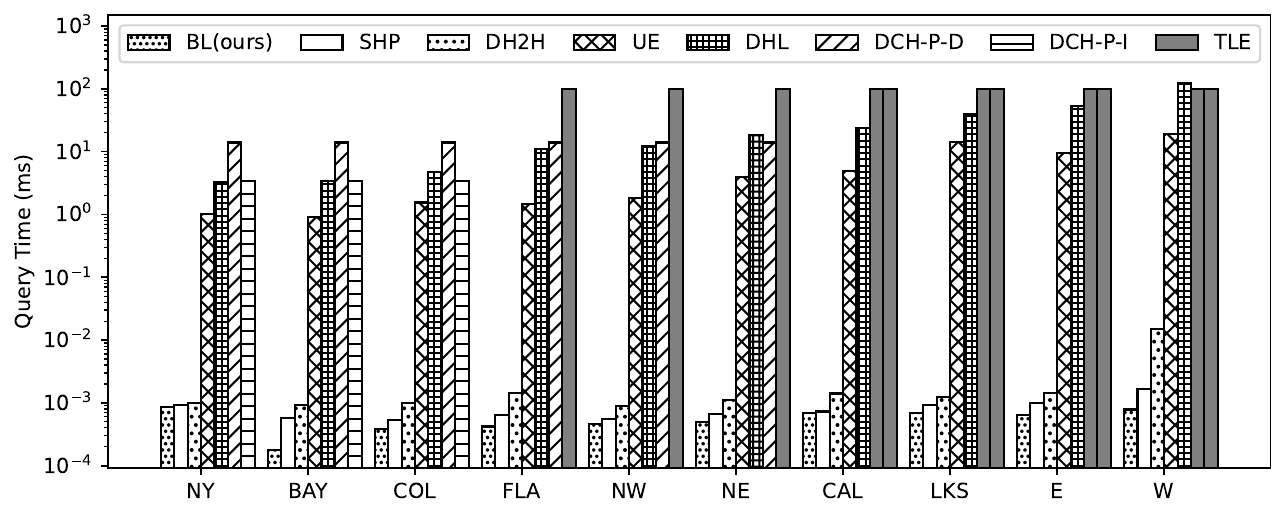}
  \caption{Query Processing}
  \label{fig:query}
\end{figure}

\stitle{Response Time} We test 100,000 random queries for each dataset in terms of response time shown in Fig.\ref{fig:query}. Methods based on Hub Labeling, such as our approach, SHP, and DH2H, which operate at the microsecond level, are significantly ahead of CH-based methods that perform at the millisecond level. The reason our method can outpace traditional Hub Labeling-based methods is that we expedite queries by {\em organizing} them into mutually independent and smaller search spaces. For example, the average label size of a border label does not exceed the number of borders, thereby reducing the cost of linear merging.

\begin{table*}[!htb]
\centering
\caption{Indexing time and index size. }
	\centering
	\resizebox{1.0\textwidth}{!}{
		\begin{tabular}{|c|c|c|c|c|c|c|c|c|c|c|c|c|c|c|c|c|c|} \hline
			{\multirow{4}{*}{Graph}} & \multicolumn{8}{c|}{Indexing Time ($s$)}  &\multirow{13}{*}{~}  & \multicolumn{8}{c|}{Index Size ($MB$)}   \\  \cline{2-9}\cline{11-18}
			
			&  \multicolumn{2}{c|}{Ours}  & \multicolumn{6}{c|}{Competitors} & & \multicolumn{2}{c|}{Ours}  & \multicolumn{6}{c|}{Competitors} \\ 
            \cline{2-9}\cline{11-18}

			& \multirow{2}{*}{BL} & \multirow{2}{*}{Districts} & \multirow{2}{*}{SHP} & \multirow{2}{*}{UE} & \multirow{2}{*}{DH2H} & \multirow{2}{*}{DCH-P-D} & \multirow{2}{*}{DCH-P-I} &\multirow{2}{*}{DHL} & &\multirow{2}{*}{BL-INT} & \multirow{2}{*}{BL-INN} & \multirow{2}{*}{SHP} & \multirow{2}{*}{UE} & \multirow{2}{*}{DH2H} & \multirow{2}{*}{DCH-P-D} & \multirow{2}{*}{DCH-P-I} & \multirow{2}{*}{DHL}\\
			&& &  &  &  &  &  &  &  & & &  & &  &  & & \\  \cline{1-9}\cline{11-18}
			  NY &  0.7 &  10.8 & 13.4 & 3.9 & 9.6  & 145.2 & 1,676.1 & 6.3 & &  13 &  246 & 303 & 38 & 391 & 7 & 7 & 15\\  \cline{1-9}\cline{11-18}
           BAY & 1.3 &  3.8 & 6.4 & 3.2 & 5.8 & 155.7 & 1,192.8 & 3.4 & &  28 &  116 & 177& 31 & 377 & 7 & 7 &12\\  \cline{1-9}\cline{11-18}
			 COL & 3.2 & 6.2 & 8.9 & 3.8 & 6.8 & 239.1 & 2,200.1 & 4.2 & & 49 & 140 & 173 & 39  & 587 & 8 & 8 & 15 \\ \cline{1-9}\cline{11-18}
			 FLA & 9.2 &  18.4 & 32.9 & 9.1 & 16.2 & 1,3977.8  & TLE & 9.8 &  &  165 & 432 & 710 & 100 & 1330  & 21 & TLE & 39\\  \cline{1-9}\cline{11-18}
			 NW & 14.9 & 19.7  & 25.8 & 10.5  &  18.6 & 1,559.7 & TLE & 10.6 &   &  195 & 479 & 572 & 99 & 1675 & 21 & TLE  &38

    \\  \cline{1-9}\cline{11-18}
    NE & 15.1 &  32.2 & 64.3 & 24.7
   & 42.3 & 3,491.3 & TLE &26.6 & & 240 &  755 & 919  & 171 & 3,152 & 33 & TLE  &67\\  \cline{1-9}\cline{11-18}
    CAL & 22.0 &  48.5 & 53.3 & 23.3 & 42.9  & 5,277.4 & TLE &25.6 & & 348 &  844 &1349 & 184 & 3,999 & 902 & TLE &71\\  \cline{1-9}\cline{11-18}
			 LKS & 14.2 &  85.5 & 120.4 & 76.7 &  106.5 & TLE & TLE & 58.2 & & 215 & 1358  & 1,600 & 318 & 8,885 & TLE & TLE  & 122\\  \cline{1-9}\cline{11-18}
			 E & 68.2 &  95.3 & 166.0 & 54.7 & 116.1 & TLE & TLE &53.0 & &733 &  1953 & 3360 & 184 & 9,917 &  TLE  & TLE &136\\ \cline{1-9}\cline{11-18}
			 W & 125.8 &  148.4 & 238.5 & 94.9 & 181.7 & TLE & TLE &89.7 & & 2070	&  3251
    & 5550 & 590 & 21,076 & TLE & TLE & 227 \\ \hline  					
		\end{tabular}
	}\label{tbl:allindexing1}
\end{table*}

\stitle{Indexing time and Index size} As Table \ref{tbl:allindexing1} shows, we firstly study the relative performance of indexing time. Overall, UE and DHL are the most efficient methods with least indexing time. Both approaches have their own advantages and disadvantages when applied to different datasets, although the differences between them are minimal. In the table, our method is represented in two columns: BL and Districts. BL denotes the time taken to establish the border labeling, whereas Districts represent the cumulative time taken to compute the shortcuts using border labeling for each district, in addition to the time taken to sequentially build local indexes for each district. Despite our method falling slightly behind CH-based methods, even when dealing with large datasets like W with 15.2 million edges, we only require approximately 3-4 times the duration of the best-performing approach. This indicates that our method is comparable to the more straightforward and easier-to-maintain approaches, bringing it close in terms of efficiency. At the same time, while another HL-based method, DH2H, is faster than ours, it incurs a significant label size; for instance, for W, it requires 20GB of memory, which can be prohibitive for many systems. Although the two DCH methods lead in terms of index size, the benefits gained are negligible when compared to the enormous construction time costs; they struggle to build indexes for larger road networks within a tolerable timeframe.

In conclusion, our experimental results indicate that our method is characterized by ultra-fast query speed, suitable label size, and competitive construction time. Our approach is highly suitable for application in a wide range of scenarios.

\section{Conclusion}\label{sec:Conclusion}
\label{conclusion}

In this work, we proposed a novel and efficient method to address exact shortest distance queries under dynamic road networks. Our method is based on distance labeling on vertices, which is common to the existing exact distance querying methods, but our labeling algorithm stands on a totally new idea. 
Based on the concept of districts and borders, our algorithm conducts Dijistra from all the border vertices with pruning. Moreover, we also proposed adaptive bound to ensure correctness for in-district local queries. Though the algorithm is simple, our query performance surpasses our competitors significantly decreasing the query latency. Our border pushing order is degree-based, which can save preprocessing time. Furthermore, our method is inherently aligned with the edge computing environment, which demonstrates its potential application value in the industrial scenarios and its capability to accommodate multiple users. In summary, this can represent a highly robust edge computing based database system for a diverse range of applications.
In the future, we would explore the potential of various labeling methods and investigate the possibilities of employing a hybrid ordering approach to pushing, rather than a singular method.

\bibliographystyle{splncs04}
\bibliography{ECSDQ}

\begin{thebibliography}{10}
\providecommand{\url}[1]{\texttt{#1}}
\providecommand{\urlprefix}{URL }
\providecommand{\doi}[1]{https://doi.org/#1}

\bibitem{PLL13}
Akiba, T., Iwata, Y., Yoshida, Y.: Fast exact shortest-path distance queries on large networks by pruned landmark labeling. In: Ross, K.A., Srivastava, D., Papadias, D. (eds.) Proceedings of the {ACM} {SIGMOD} International Conference on Management of Data, {SIGMOD} 2013, June 22-27, 2013. pp. 349--360. {ACM}, New York, NY, USA (2013). \doi{10.1145/2463676.2465315}, \url{https://doi.org/10.1145/2463676.2465315}

\bibitem{par16}
Chondrogiannis, T., Gamper, J.: Pardisp: A partition-based framework for distance and shortest path queries on road networks. In: 2016 17th IEEE International Conference on Mobile Data Management (MDM). vol.~1, pp. 242--251 (2016). \doi{10.1109/MDM.2016.45}

\bibitem{DHL2023}
Dan, T., Pan, X., Zheng, B., Meng, X.: Double hierarchical labeling shortest distance querying in time-dependent road networks. In: 2023 IEEE 39th International Conference on Data Engineering (ICDE). pp. 2077--2089 (2023). \doi{10.1109/ICDE55515.2023.00161}

\bibitem{website:didi}
Didi: Didi statistics and facts (2022). \url{https://expandedramblings.com/index.php/didichuxing-facts-statistics/}

\bibitem{DIJKSTRA1959}
DIJKSTRA, E.: A note on two problems in connexion with graphs. Numerische Mathematik  \textbf{1},  269--271 (1959), \url{http://eudml.org/doc/131436}

\bibitem{dasfaa23}
Feng, B., Chen, Z., Yuan, L., Lin, X., Wang, L.: Contraction hierarchies with label restrictions maintenance in dynamic road networks. In: Wang, X., Sapino, M.L., Han, W.S., El~Abbadi, A., Dobbie, G., Feng, Z., Shao, Y., Yin, H. (eds.) Database Systems for Advanced Applications. pp. 269--285. Springer Nature Switzerland, Cham (2023)

\bibitem{DBLP:conf/alenex/bi-di3}
Goldberg, A.V., Kaplan, H., Werneck, R.F.: Reach for a*: Efficient point-to-point shortest path algorithms. In: Raman, R., Stallmann, M.F. (eds.) Proceedings of the Eighth Workshop on Algorithm Engineering and Experiments, {ALENEX} 2006, Miami, Florida, USA, January 21, 2006. pp. 129--143. {SIAM} (2006). \doi{10.1137/1.9781611972863.13}, \url{https://doi.org/10.1137/1.9781611972863.13}

\bibitem{A*}
Hart, P.E., Nilsson, N.J., Raphael, B.: A formal basis for the heuristic determination of minimum cost paths. IEEE Transactions on Systems Science and Cybernetics  \textbf{4}(2),  100--107 (1968). \doi{10.1109/TSSC.1968.300136}

\bibitem{TMC4}
He, X., Wang, S., Wang, X.: Providing worst-case latency guarantees with collaborative edge servers. IEEE Transactions on Mobile Computing  \textbf{22}(05),  2955--2971 (may 2023). \doi{10.1109/TMC.2021.3133306}

\bibitem{ibm}
ibm: Architecting at the edge. \url{https://www.ibm.com/blog/architecting-at-the-edge///}

\bibitem{microsoft}
ibm: Architecting at the edge. \url{https://learn.microsoft.com/en-us/azure/iot-edge/about-iot-edge?view=iotedge-1.4///}

\bibitem{ye_order}
Li, Y., U, L.H., Yiu, M.L., Kou, N.M.: An experimental study on hub labeling based shortest path algorithms. Proc. {VLDB} Endow.  \textbf{11}(4),  445--457 (2017). \doi{10.1145/3186728.3164141}, \url{http://www.vldb.org/pvldb/vol11/p445-li.pdf}

\bibitem{boundary22}
Liu, Z., Li, L., Zhang, M., Hua, W., Zhou, X.: Fhl-cube: Multi-constraint shortest path querying with flexible combination of constraints. Proc. VLDB Endow.  \textbf{15}(11),  3112–3125 (jul 2022). \doi{10.14778/3551793.3551856}, \url{https://doi.org/10.14778/3551793.3551856}

\bibitem{par2}
Liu, Z., Li, L., Zhang, M., Hua, W., Zhou, X.: Multi-constraint shortest path using forest hop labeling. The VLDB Journal  \textbf{32}(3),  595–621 (May 2023). \doi{10.1007/s00778-022-00760-2}, \url{http://dx.doi.org/10.1007/s00778-022-00760-2}

\bibitem{UE}
Ouyang, D., Yuan, L., Qin, L., Chang, L., Zhang, Y., Lin, X.: Efficient shortest path index maintenance on dynamic road networks with theoretical guarantees. Proc. {VLDB} Endow.  \textbf{13}(5),  602--615 (2020). \doi{10.14778/3377369.3377371}, \url{http://www.vldb.org/pvldb/vol13/p602-ouyang.pdf}

\bibitem{Information}
{People's Daily Overseas Edition}: China's internet map covers more than 1 billion users every day (2022), \url{http://www.ciia.org.cn/news/19184.cshtml}

\bibitem{bi-direct}
Pohl, I.: Bi-directional and heuristic search in path problems. Ph.D. thesis, Stanford University, {USA} (1969), \url{https://searchworks.stanford.edu/view/2197829}

\bibitem{TMC1}
Pokhrel, S., Walid, A.: Learning to harness bandwidth with multipath congestion control and scheduling. IEEE Transactions on Mobile Computing  \textbf{22}(02),  996--1009 (feb 2023). \doi{10.1109/TMC.2021.3085598}

\bibitem{bi-di2}
Sint, L., de~Champeaux, D.: An improved bidirectional heuristic search algorithm. J. ACM  \textbf{24}(2),  177–191 (apr 1977). \doi{10.1145/322003.322004}, \url{https://doi.org/10.1145/322003.322004}

\bibitem{DBLP:conf/sigmod/WeiWL20}
Wei, V.J., Wong, R.C., Long, C.: Architecture-intact oracle for fastest path and time queries on dynamic spatial networks. In: Maier, D., Pottinger, R., Doan, A., Tan, W., Alawini, A., Ngo, H.Q. (eds.) Proceedings of the 2020 International Conference on Management of Data, {SIGMOD} Conference 2020, online conference [Portland, OR, USA], June 14-19, 2020. pp. 1841--1856. {ACM} (2020). \doi{10.1145/3318464.3389718}, \url{https://doi.org/10.1145/3318464.3389718}

\bibitem{tmc5}
Yi, C., Cai, J., Zhang, T., Zhu, K., Chen, B., Wu, Q.: Workload re-allocation for edge computing with server collaboration: A cooperative queueing game approach. IEEE Transactions on Mobile Computing  \textbf{22}(05),  3095--3111 (may 2023). \doi{10.1109/TMC.2021.3128887}

\bibitem{DBLP:conf/icde/ZhangLHMCZ21}
Zhang, M., Li, L., Hua, W., Mao, R., Chao, P., Zhou, X.: Dynamic hub labeling for road networks. In: 37th {IEEE} International Conference on Data Engineering, {ICDE} 2021, Chania, Greece, April 19-22, 2021. pp. 336--347. {IEEE} (2021). \doi{10.1109/ICDE51399.2021.00036}, \url{https://doi.org/10.1109/ICDE51399.2021.00036}

\bibitem{dynamic1}
Zhang, Y., Yu, J.X.: Relative subboundedness of contraction hierarchy and hierarchical 2-hop index in dynamic road networks. In: Proceedings of the 2022 International Conference on Management of Data. p. 1992–2005. SIGMOD '22, Association for Computing Machinery, New York, NY, USA (2022). \doi{10.1145/3514221.3517875}, \url{https://doi.org/10.1145/3514221.3517875}

\bibitem{GT15}
Zhong, R., Li, G., Tan, K.L., Zhou, L., Gong, Z.: G-tree: An efficient and scalable index for spatial search on road networks. IEEE Transactions on Knowledge and Data Engineering  \textbf{27}(8),  2175--2189 (2015). \doi{10.1109/TKDE.2015.2399306}

\end{thebibliography}
\end{document}